\newcommand{\diam}{\mathop{\operator@font diam}}
\newtheorem{definition}{Definition}[section]
\newtheorem{theorem}{Theorem}[section]
\newtheorem{proposition}{Proposition}[section]
\newtheorem{lemma}{Lemma}[section]
\newcommand{\cT}{\mathcal{T}}
\begin{document}

\title{\Huge{\textsc{The Order on the Light Cone and its induced Topology}}}

\author{Kyriakos Papadopoulos$^1$, Santanu Acharjee$^2$, Basil K. Papadopoulos$^3$\\
\small{1. Department of Mathematics, Kuwait University, PO Box 5969, Safat 13060, Kuwait}\\
\small{2. Department of Mathematics, Debraj Roy College, India}\\
\small{3. Department of Civil Engineering, Democritus University of Thrace, Greece}}

\date{}

\maketitle

\begin{abstract}
In this article we first correct a recent misconception about a topology that was suggested by Zeeman as a possible alternative to his Fine topology. This misconception appeared while trying to establish the causality in the ambient boundary-ambient space cosmological model. We then show that this topology is actually the intersection topology (in the sense of G.M. Reed) between the Euclidean topology on $\mathbb{R}^4$ and the order topology whose order, namely horismos, is defined on the light cone and we show that the order topology from horismos belongs to the class of Zeeman topologies. These results accelerate the need for a deeper and more systematic study of the global topological properties of spacetime manifolds.
\end{abstract}


\section*{Acknowledgement} The Authors would like to thank the Reviewers for the creative
comments and remarks, as well as for their corrections, which have improved the text
significantly.

\section{Introduction.}

\subsection{Causality.}

In spacetime geometry, one can introduce three causal relations, namely, the
chronological order $\ll$, the causal order $\prec$ and the relation horismos $\rightarrow$ (that we will call an ``order'' conventionally), and these can be meaningfully extended to  any \emph{event-set}, a set
$(X,\ll,\prec,\rightarrow)$ equipped with all three of these orders having no metric \cite{Penrose-Kronheimer,Penrose-difftopology}. In this context we say that the event $x$ chronologically precedes an event $y$, written
$x\ll y$ if $y$ lies inside the future null cone of $x$, $x$ {\em causally precedes} $y$,
$x \prec y$, if $y$ lies inside or on the future null cone of $x$ and  $x$ is at {\em horismos}
with $y$, written $x \rightarrow y$, if $y$ lies on the future null cone of $x$. The chronological order is irreflexive, while the causal order and horismos are reflexive. Then, the notations $I^+(x) = \{y \in M : x \ll y\},  J^+(x) = \{y \in M : x \prec y\}$ will be used for the chronological and the causal futures  of $x$ respectively (and with a minus instead of a plus sign  for the pasts), while the future null cone of $x$ will be denoted by $\mathcal{N}^+(x)\equiv\partial J^{+}(x)= \{y \in M : x \rightarrow y\}$, and dually for the null past of $x$, cf. \cite{Penrose-difftopology}. Our metric signature is timelike, $(+,-,-,-)$. $Q$ denotes the characteristic quadratic form on $M$, given by
$Q = x_0^2-x_1^2-x_2^2-x_3^2$, $x= (x_0,x_1,x_2,x_3) \in M$
and $<$ is the partial order on $M$ given by $x<y$ if the
vector $y-x$ is timelike and pointing towards the future (in this sense, $<$ is actually the chronological order $\ll$).

The above definitions of  futures and pasts of a set can be trivially extended to the situation of any partially ordered set $(X,<)$.
In a purely topological context this is usually done by passing to the so-called upper (i.e. future) and lower (i.e. past)
sets which in turn lead to the  future  and past  topologies (see \cite{Compendium}).
A subset $A \subset X$ is a {\em past set} if $A = I^-(A)$ and dually for the future. Then, the {\em future topology} $\cT^+$ is generated
by the subbase $\mathcal{S}^+ = \{X \setminus I^-(x) : x \in X\}$
and the {\em past topology} $\cT^-$ by $\mathcal{S}^- = \{X \setminus I^+(x)  : x \in X\}$.
The {\em interval topology} $\cT_{in}$ on $X$ then consists of basic sets which are finite intersections
of subbasic sets of the past and the future topologies. This is in fact the topology that fully characterizes a given order of the poset $X$. Here we clarify that the names ``future topology'' and ``past topology'' are the best possible inspirations for names that
came in the mind of the authors, but are not standard in the literature. The motivation was
that they are generated by
complements of past and future sets, respectively (i.e. closures of future and
past sets, respectively). Also, the authors wish to highlight the distinction between
the interval topology $\cT_{in}$ which appears to be of an important significance in
lattice theory, from the ``interval topology'' of A.P. Alexandrov (see \cite{Penrose-difftopology}, page 29).
$\cT_{in}$ is of a more general nature, and it can be defined via any relation, while the Alexandrov
topology is restricted to the chronological order. These two topologies are different in nature,
as well as in definition, so we propose the use of ``interval topology'' for $\cT_{in}$ exclusively,
and not for the Alexandrov topology. It is worth mentioning that the Alexandrov topology being
Hausforff is equivalent to the Alexandrov topology being equal to the manifold topology which is equivalent
to the spacetime being strongly causal; the Zeeman topologies that we mention is this paper are
finer than the manifold topology.

The so-called \emph{orderability  problem} is concerned with the conditions under which the topology $\cT_<$ induced
by the order $<$  is equal to some given topology $T$ on $X$ (\cite{Good-Papadopoulos} and \cite{Orderability-Theorem}).

In this paper we correct a misconception about Zeeman $Z$ topology that appeared in \cite{Ordr-Ambient-Boundary} and we reveal the nature of the actual order interval topology which is induced from the order horismos, on the light cone.

\subsection{The Class of Zeeman Topologies.}

The class $\mathfrak{Z}$, of Zeeman topologies, is the class of topologies
on $M$ strictly finer than the Euclidean topology,
which have the property that they induce the $1$-dimensional Euclidean topology on every time
axis and the $3$-dimensional Euclidean topology on every space
axis.

Zeeman (see \cite{Zeeman2} and \cite{Zeeman1}) showed that
the causal structure of the light cones on the Minkowski
space determines its linear structure. After initiating
the question on whether a topology on Minkowski space, which
depends on the light cones, implies its linear structure as well,
he constructed the Fine Topology (that we will call Zeeman $F$
topology) which is defined as the finest topology for $M$ in the class
$\mathfrak{Z}$.

$F$ satisfies, among other properties, the
following two theorems:

\begin{theorem}\label{Theorem2}
Let $f : I \to M$ be a continuous map of the unit interval $I$
 into $M$. If $f$ is strictly order-preserving, then the image $f(I)$ is a
 piecewise linear path, consisting of a finite number of intervals along time axes.
 \end{theorem}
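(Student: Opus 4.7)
The plan is to show that $f(I)$ lies on a single time axis through $f(0)$, by using the axis-based characterisation of the fine topology $F$ to force the spatial part of $f$ to have vanishing derivative everywhere on $I$. Because $F$ strictly refines the Euclidean topology on $M$, the map $f$ is Euclidean-continuous, and strict order-preservation makes the time component $g(t)$ of $f(t)$ continuous and strictly increasing, hence a homeomorphism onto a closed subinterval of $\mathbb{R}$. Writing $f(t)=(g(t),\vec{x}(t))$ with $\vec{x}:I\to\mathbb{R}^{3}$ continuous, the statement reduces to proving that $\vec{x}$ is constant.

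Fix $t_0\in I$ and set $x_0=f(t_0)$. The main technical step is, for every $\theta\in(0,1)$ and every $\epsilon>0$, to exhibit an $F$-open neighbourhood of $x_0$ of the form $U_{\epsilon,\theta}=B_\epsilon(x_0)\setminus A_\theta$, where
$$
A_\theta=\Bigl\{\,y\in J^+(x_0)\cup J^-(x_0)\;:\;|\vec{y}-\vec{x}_0|\geq\theta\,|y^0-x_0^0|\,\Bigr\}\setminus\{x_0\}.
$$
This $A_\theta$ is the closed causal future and past of $x_0$ with an open ``cone of opening $\theta$'' about the time axis $T_{x_0}$ through $x_0$ excised, and with $x_0$ itself removed. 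Verifying $U_{\epsilon,\theta}\in F$ reduces, by the axis-characterisation of the Zeeman fine topology, to checking that $A_\theta$ meets every time and every space axis in a Euclidean-closed set: on $T_{x_0}$ and on the space hyperplane $S_{x_0}$ through $x_0$ the trace is empty; on any other time axis it is a union of two closed bounded intervals (causal rays truncated by the cone); on any other space axis it is a closed annulus, a closed ball in $J^\pm(x_0)$ with an open concentric ball carved out.

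With $U_{\epsilon,\theta}$ in hand, $F$-continuity of $f$ at $t_0$ yields $\delta>0$ with $f\bigl((t_0-\delta,t_0+\delta)\cap I\bigr)\subseteq U_{\epsilon,\theta}$. For $s\neq t_0$, strict order-preservation puts $f(s)\in J^+(x_0)\cup J^-(x_0)\setminus\{x_0\}$, so $f(s)$ must lie in the open cone, i.e.\ $|\vec{x}(s)-\vec{x}(t_0)|<\theta\,|g(s)-g(t_0)|$. As $\theta$ was arbitrary, the composition $\phi:=\vec{x}\circ g^{-1}$ is everywhere differentiable on $g(I)$ with $\phi'\equiv0$; by the mean value theorem $\phi$, and hence $\vec{x}$, is constant. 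Thus $f(I)$ is a single subinterval of the time axis $T_{x_0}$, which is a piecewise linear path consisting of one interval along one time axis.

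The main obstacle is the construction of $A_\theta$ with the correct $F$-closedness. The naive candidate $(J^+(x_0)\cup J^-(x_0))\setminus T_{x_0}$ fails, because on a space hyperplane strictly above $S_{x_0}$ one would be deleting only the single point $S\cap T_{x_0}$ from the closed ball $S\cap J^+(x_0)$, leaving a non-closed punctured ball and destroying $F$-closedness. Carving out an entire open concentric ball of positive angular radius $\theta$, rather than a single point, is what restores axis-closedness, and this coupling between the opening angle of the cone and the axis-closedness condition is exactly the mechanism that lets $F$-continuity detect the preferred time axis through $x_0$.
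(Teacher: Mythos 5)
There is a fatal error in your construction, and it shows up precisely because of what ``time axis'' means in Zeeman's setting. The paper (following Zeeman's notation $g\mathbb{R}$, $g\in G$ with $G$ containing the Lorentz group) counts as a time axis \emph{every} straight line with timelike direction, not only lines parallel to the coordinate $x_0$-axis. Take the time axis through $x_0$ with direction $(1,\vec v)$, where $\theta\le|\vec v|<1$. Every point $y=x_0+s(1,\vec v)$, $s\ne 0$, lies in $J^+(x_0)\cup J^-(x_0)$ and satisfies $|\vec y-\vec x_0|=|s|\,|\vec v|\ge\theta|s|=\theta|y^0-x_0^0|$, so this whole axis minus $\{x_0\}$ lies in $A_\theta$. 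Hence the trace of $U_{\epsilon,\theta}=B_\epsilon(x_0)\setminus A_\theta$ on that axis is the singleton $\{x_0\}$, which is not open in the $1$-dimensional Euclidean topology of the axis; so $U_{\epsilon,\theta}$ is \emph{not} $F$-open. Your axis check only covered the coordinate-parallel picture (``causal rays truncated by the cone''), and also note that $A_\theta$ restricted to such a tilted axis is $A\setminus\{x_0\}$, which is not even closed, so the closedness criterion you invoke fails as well.

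This is not a repairable detail, because the statement you set out to prove --- that $\vec x$ is constant, i.e.\ that $f(I)$ lies on a single coordinate-parallel time axis --- is false. The map $f(t)=x_0+t(1,\vec v)$ with $0<|\vec v|<1$ is strictly order-preserving and $F$-continuous (its image is a time axis, on which $F$ induces the Euclidean topology), yet its spatial part is non-constant; more generally a zigzag of finitely many future-directed timelike segments in \emph{different} timelike directions is admissible, and this is exactly what the theorem's conclusion ``a finite number of intervals along time axes'' is designed to allow. Any correct argument must be Lorentz-covariant in the sense that it singles out, near each parameter value, \emph{some} timelike line through $f(t_0)$ (not the coordinate one fixed in advance), and must then add a compactness argument to get finiteness of the number of segments --- a step your plan does not address at all. (For reference, the paper itself does not prove this theorem; it is quoted from Zeeman's 1967 paper, where the proof proceeds along these covariant lines.)
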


Let $G$ be the group of automorphisms of $M$, given by the Lorentz group,
translations and dilatations.

\begin{theorem}\label{main}
The group of homeomorphisms of the Minkowski space under $F$ is $G$.
\end{theorem}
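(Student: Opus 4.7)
The plan is to prove the equality by two inclusions. For the easy direction $G \subseteq \text{Homeo}(M, F)$, I would check that each generator of $G$ is an $F$-homeomorphism. Lorentz transformations, translations and dilatations are Euclidean homeomorphisms, and they permute the family of time axes among themselves and the family of space axes among themselves. The conditions defining the class $\mathfrak{Z}$ are invariant under any such map, so any $g \in G$ pulls $\mathfrak{Z}$ bijectively to itself, sending the finest element to the finest element; hence $g^{-1}(F) = F$, so $g$ is an $F$-homeomorphism.

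For the hard inclusion, let $h$ be an arbitrary $F$-homeomorphism. The key intermediate claim is that $h$ carries every time axis onto a time axis. Given a time axis $T$, parameterise it by a strictly order-preserving affine map $f : I \to M$. Since $F$ restricts to the $1$-dimensional Euclidean topology on $T$, the map $f$ is $F$-continuous, and then $h \circ f$ is also $F$-continuous on $I$. If one can argue that the composed map remains strictly order-preserving (after possibly swapping future and past), Theorem \ref{Theorem2} applies and forces $h(T)$ to be a piecewise linear path consisting of time-axis segments. A local rigidity argument is then needed to upgrade ``piecewise linear along time axes'' to ``a single time axis'': at a genuine kink the $F$-structure distinguishes two incident timelike directions, producing a neighbourhood base incompatible with the smooth $F$-structure of a point in the interior of a single time axis. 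Applying the same argument to $h^{-1}$ shows that $h$ permutes time axes bijectively.

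Once $h$ preserves the family of time axes, it preserves incidence and parallelism of lines, so a fundamental-theorem-of-affine-geometry type result forces $h$ to be affine. Since $h$ must then also send null cones to null cones (a null line is the limit of time axes whose slopes tend to the light-cone slope), the affine part is a Lorentz transformation composed with a dilatation, and hence $h \in G$. Equivalently, preservation of time axes yields preservation of the chronological order $\ll$, making $h$ a causal automorphism, and one then invokes Zeeman's earlier theorem that causal automorphisms of Minkowski space are exactly the elements of $G$.

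The main obstacle in this programme is the rigidity step: excluding that an $F$-homeomorphism could bend a straight time axis into a genuinely broken piecewise-linear path, and simultaneously promoting $F$-continuity of $h\circ f$ to strict order-preservation so that Theorem \ref{Theorem2} actually applies. This is where the fineness of $F$ is decisive, since only that fineness ensures that $F$-neighbourhood bases record the null-cone structure at each point and thus cannot be consistent with a kink or with a reversal of chronological order. Orientation (future versus past) is a comparatively routine consequence of connectedness once this rigidity has been established.
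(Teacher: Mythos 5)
First, note that the paper itself does not prove this statement: Theorem \ref{main} is quoted as background from Zeeman \cite{Zeeman1}, so the comparison baseline is Zeeman's original argument. Your easy direction is fine and is essentially the standard one: elements of $G$ permute time axes and space axes and restrict to Euclidean homeomorphisms on them, hence push the class $\mathfrak{Z}$ onto itself and fix its finest element $F$.

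The hard direction, however, has a genuine gap exactly where you locate ``the main obstacle,'' and that obstacle is not a technicality to be cleaned up later --- it is the entire content of the theorem. To apply Theorem \ref{Theorem2} to $h\circ f$ you must know that $h\circ f$ is strictly order-preserving, i.e.\ that $h$ respects the chronological order along the axis $T$; but preservation of the causal structure by an arbitrary $F$-homeomorphism is precisely what the hard direction has to establish, so the argument as sketched is circular. The appeal to ``the fineness of $F$ records the null-cone structure at each point'' is an assertion, not a proof, and the same goes for the kink-exclusion step. Zeeman's actual route is different: he first proves, by a purely topological characterization inside $F$ (for instance, $F$ induces the discrete topology on every light ray but the $1$-dimensional Euclidean topology on every time axis, combined with local connectedness-type arguments on the sets $Z_\epsilon(x)$), that any $F$-homeomorphism maps light rays onto light rays; only then does he invoke his earlier theorem \cite{Zeeman2} that bijections preserving the light-ray (equivalently causal) structure of Minkowski space are exactly the elements of $G$. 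Your endgame (reduce to the 1964 causality theorem, or to an affine rigidity statement) is the right one, but the bridge to it --- a non-circular proof that $F$-homeomorphisms preserve time axes or null cones --- is missing, so the proposal does not yet constitute a proof.
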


Goebel (see \cite{gobel}) showed that the results of Zeeman are valid
without any restrictions on the spacetime, showing in particular that
the group of homeomorphisms of a spacetime $S$, with respect to the
general relativistic analogue of $F$, is the group of all homothetic
transformations of $S$.

Having Goebel's paper \cite{gobel} in mind, from now on we will denote
any spacetime with the letter $M$, without particularly restricting
ourselves
to Minkowski spacetime.

Goebel defined Zeeman topologies in curved spacetimes as follows.
Let $E^4$ be the four-dimensional Euclidean topology, let $M$ be
a spacetime manifold and let $S$ be a set of subsets of $M$. A set $A \subset M$
is open in $Z(S,E^4)$ (a topology in class $\mathfrak{Z}$), if $A \cap B$ is open
in $(B,E_B^4)$ (the subspace topology of the natural manifold topology $(M,E^4)$
with respect to $(B,E^4)$), for all $B \in S$.

\subsection{Notation}

Throughout the text, $B_\epsilon^E(x)$ will denote the Euclidean ball around $x$, radius $\epsilon$. $C^T(x) = \{y: y=x \textrm{ or } Q(y-x)>0\} $ will denote the time cone of $x$, $C^L(x) = \{y : Q(y-x)=0\}$ the light cone of $x$ and $C^S(x)= \{y: y=x \textrm{ or } Q(y-x)<0\}$ the space cone of $x$. Last, but not least, for time and space axes we will adopt Zeeman's notation $g\mathbb{R}$ and $g\mathbb{R}^3$, respectively (see \cite{Zeeman1}).

\section{The Zeeman $Z$ Topology revisited.}

We call $Z$ the topology that is mentioned by Zeeman
(see \cite{Zeeman1}) as an alternative topology for $F$. This topology is coarser than the Fine
Zeeman topology $F$, and it has a countable base of open sets of
the form:
\[ Z_\epsilon (x) = B_\epsilon^E(x) \cap (C^T(x) \cup C^S(x)) \]

The sets $Z_{\epsilon}(x)$ are open in $M^F$ (the manifold $M$ equipped with $F$) but not in
$M^E$ (the manifold $M$ equipped with the Euclidean topology $E$). In addition,
the topology $Z$ is finer than $E$.  Theorem \ref{main} is satisfied, among
other properties that $F$ satisfies as well, but Theorem \ref{Theorem2}
is not satisfied. According to Zeeman, $Z$ is technically simpler than $F$,
but it is intuitively less attractive than $F$.

In \cite{Ordr-Ambient-Boundary} we have claimed the following theorem:

\begin{theorem}\label{3}
The order horismos $\rightarrow$ induces the Zeeman topology $Z$, on $M$.
\end{theorem}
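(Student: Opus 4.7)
\textbf{Proof proposal for Theorem \ref{3}.} The natural plan is to prove the two topologies coincide by showing mutual refinement, i.e. that every subbasic set of $\cT_\to$ (the interval topology generated by horismos, as defined in the introduction) is open in $Z$, and conversely that every member of the countable base $\{Z_\epsilon(x)\}$ of $Z$ is open in $\cT_\to$. Because horismos is reflexive, one has to be careful about the precise form of the interval topology; I will take the subbase prescribed in the introduction verbatim, namely
\[
U^+_x = M \setminus \mathcal{N}^-(x), \qquad U^-_x = M \setminus \mathcal{N}^+(x), \quad x \in M.
\]

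The first step is the easy direction. A single finite intersection gives
\[
U^+_x \cap U^-_x \;=\; M \setminus \bigl(\mathcal{N}^+(x) \cup \mathcal{N}^-(x)\bigr) \;=\; M \setminus C^L(x) \;=\; C^T(x) \cup C^S(x),
\]
so the ``cone-removed'' factor $C^T(x) \cup C^S(x)$ that appears in every $Z_\epsilon(x)$ is already a basic open set of $\cT_\to$. Since $B_\epsilon^E(x)$ is Euclidean-open and $Z$ is finer than $E$, it would then follow that $Z_\epsilon(x)=B_\epsilon^E(x)\cap (C^T(x)\cup C^S(x))$ is $Z$-open, so each basic $Z_\epsilon(x)$ is expressible in the desired form once we also know $B_\epsilon^E(x)$ is in $\cT_\to$.

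The substantial step is the reverse direction, together with producing the Euclidean ball from horismos data. For a finite intersection $V=\bigcap_{i} U^{\pm}_{y_i}$ and a point $p\in V$, I would try to wedge a small $Z_\epsilon(p)$ between $p$ and $M\setminus V$ by using the transversality of the null hypersurfaces $C^L(y_i)$ to $p$, bounding below the Euclidean distance from $p$ to $\bigcup_i C^L(y_i)$ in a coordinate patch. Conversely, to realize $B_\epsilon^E(x)$ inside $\cT_\to$, I would attempt to write it as a union of intersections $\bigcap_i U^\pm_{y_i}$ obtained by sweeping the auxiliary vertices $y_i$ along a dense subset outside $\overline{B_\epsilon^E(x)}$, using the fact that small perturbations of the vertex deform the corresponding light cone in a Lipschitz-controlled way in Minkowski coordinates.

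The main obstacle, which I expect to be decisive, is precisely the construction of $B_\epsilon^E(x)$ from horismos data. Each $U^\pm_y$ is a Euclidean-open, essentially codimension-zero set whose complement is a single null hypersurface; any finite intersection excises only finitely many such hypersurfaces and cannot carve out a bounded Euclidean ball, and the reflexivity $x\to x$ additionally forbids centering any $U^\pm_x$ at $x$ itself. So a proof along these lines cannot close without smuggling in genuine Euclidean input. I therefore expect the argument to succeed on the conic factor but to fail on the metric factor, thereby pinpointing the very gap that the paper announces in its abstract: the order topology from horismos yields only $M\setminus C^L(x)$, and the Euclidean ball must be provided separately, so the correct identification of Theorem \ref{3} should be with the intersection topology of $E$ and $\cT_\to$ rather than with $Z$ in isolation.
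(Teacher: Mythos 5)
Your ``proof proposal'' reaches exactly the right verdict: Theorem \ref{3} is not provable, and indeed the paper states it only as the erroneous claim from \cite{Ordr-Ambient-Boundary} that it sets out to correct -- no proof is (or could be) given. Your diagnosis of \emph{why} the argument cannot close coincides with the paper's own analysis: the subbasic sets of the horismos interval topology are complements of null cones, hence unbounded, so no finite intersection can carve out the bounded Euclidean factor $B_\epsilon^E(x)$, and the order-theoretic data only ever produce the conic factor. This is precisely the content of Proposition \ref{real horismos topology} (the genuine horismos interval topology has basic sets $A(x)=(M-C^L(x))\cup\{x\}$) and of Theorem \ref{Zeeman Z is intersection} (the Zeeman $Z$ topology is the \emph{intersection topology}, in Reed's sense, of $E$ and $T_{in}^{\rightarrow}$, so the Euclidean ball must be supplied as separate input). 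Your closing sentence is essentially the paper's main theorem.

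One small correction to your computation of the conic factor. With the \emph{reflexive} horismos, $x\in\mathcal{N}^{\pm}(x)$ (since $Q(x-x)=0$), so your set $U^+_x\cap U^-_x = M\setminus C^L(x)$ does \emph{not} contain the point $x$, whereas $C^T(x)\cup C^S(x)$ as defined in the paper does contain it (both cones include ``$y=x$''); thus $M\setminus C^L(x) = \bigl(C^T(x)\cup C^S(x)\bigr)\setminus\{x\}$, and the putative basic neighbourhood of $x$ fails to be a neighbourhood of $x$ at all. This is exactly the reflexivity subtlety you flagged, and it is why the paper passes to the \emph{irreflexive} horismos, whose interval topology restores the vertex: $A(x)=(M-C^L(x))\cup\{x\}$. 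With that adjustment your ``easy direction'' gives the correct conic factor appearing in $Z_\epsilon(x)$, and the rest of your assessment stands.
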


In this article, we observe that the above conjecture has error.
But, before we analyse this, let us see which is the real interval topology
that is induced by the order horismos $\rightarrow$.



\section{The Order Horismos and its induced Topology.}

In \cite{Ordr-Ambient-Boundary} the authors claimed (Theorem 2.1) that the order horismos $\rightarrow$ induces the Zeeman $Z$ topology on $M$. Unfortunately, this is due to a miscalculation with respect to the complements of the sets  $\mathcal{N}^{\pm}(x)$ (equation $(1)$ of \cite{Ordr-Ambient-Boundary}). In a backward analysis, starting from the Zeeman $Z$ basic open set $Z(x) = (B_\epsilon^E(x) - C^T(x))\cup \{x\}$, about an event $x \in M$, one can see that the subbasic open sets of the upper (future) and lower (past) topology cannot be complements of bounded sets. In particular, they are not complements of the future light cone at an event $x$ intersected with the Euclidean ball at $x$ radius $\epsilon$. This remark brings a disappointment that the Zeeman $Z$ topology is not an order topology (and it is certainly not the causal order on the Ambient Boundary, as suggested in \cite{Ordr-Ambient-Boundary}!), but it is still linked with the original order topology which is induced by the horismos $\rightarrow$, but in the case where horismos is irreflexive. Before we discuss about this link, we can easily see the following proposition holds.

\begin{proposition}\label{real horismos topology}
The interval topology which is induced by the irreflextive horismos $\rightarrow$, denoted by $T_{in}^{\rightarrow}$ has basic open sets which can be described as follows: for an event $x \in M$, a basic open set $A(x)$ is obtained by subtracting the null cone of $x$ and replacing $x$ back, i.e. $A(x) = (M - L(x)) \cup \{x\}$.
\end{proposition}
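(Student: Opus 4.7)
The plan is to unpack the definition of the interval topology when the order is the irreflexive horismos $\rightarrow$, and then exhibit the set $A(x)$ as a specific two-fold intersection of subbasic sets taken from the future and past topologies. First, I would specialize the general definitions of $\cT^{+}$ and $\cT^{-}$ recalled in the Introduction, replacing the chronological sets $I^{\pm}(x)$ by the horismos sets $\mathcal{N}^{\pm}(x)$. This yields subbases $\mathcal{S}^{+} = \{M \setminus \mathcal{N}^{-}(x) : x \in M\}$ and $\mathcal{S}^{-} = \{M \setminus \mathcal{N}^{+}(x) : x \in M\}$; the basic open sets of $T_{in}^{\rightarrow}$ are then, by definition, finite intersections of elements of $\mathcal{S}^{+} \cup \mathcal{S}^{-}$.

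The second step is the short calculation. For a fixed event $x \in M$, taking one subbasic set from each family, both indexed by the same point $x$, and applying De Morgan's law gives
\[
(M \setminus \mathcal{N}^{-}(x)) \cap (M \setminus \mathcal{N}^{+}(x)) = M \setminus (\mathcal{N}^{+}(x) \cup \mathcal{N}^{-}(x)) = M \setminus L(x),
\]
where $L(x)$ denotes the (apex-free) null cone at $x$. By construction this set is basic open in $T_{in}^{\rightarrow}$, and what remains is only to confirm that $x$ itself belongs to it, so that it agrees with $A(x) = (M \setminus L(x)) \cup \{x\}$ and is genuinely a basic neighborhood of $x$.

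This last verification is the only delicate point, and it is precisely where irreflexivity is essential. Under the standard reflexive convention recalled in the Introduction one would have $x \rightarrow x$, hence $x \in \mathcal{N}^{+}(x) \cap \mathcal{N}^{-}(x) \subseteq L(x)$, so $M \setminus L(x)$ would fail to contain $x$; this is exactly the pathology that forced the retraction of Theorem \ref{3}. Under the irreflexive convention adopted in the statement, $x \notin L(x)$, so $x \in M \setminus L(x)$ automatically, and the adjunction of $\{x\}$ in $A(x)$ is logically redundant but explanatory, recording the contrast with the reflexive case. Combining the two steps produces $A(x) = M \setminus L(x)$ as a basic open set of $T_{in}^{\rightarrow}$ containing $x$, which is the description claimed in the proposition.
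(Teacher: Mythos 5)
Your argument is correct and is exactly the computation the paper leaves implicit (the paper offers no proof, asserting the proposition is easy to see): specialize the subbases $\{M\setminus\mathcal{N}^{-}(x)\}$ and $\{M\setminus\mathcal{N}^{+}(x)\}$ to the horismos relation and intersect the two subbasic sets at the same event $x$, obtaining $M\setminus(\mathcal{N}^{+}(x)\cup\mathcal{N}^{-}(x))$, which is the set $A(x)$ of the proposition. The only discrepancy is notational: in the paper's conventions the null cone $C^{L}(x)=\{y:Q(y-x)=0\}$ contains its apex, so the adjunction of $\{x\}$ in $A(x)=(M-L(x))\cup\{x\}$ is genuinely needed rather than redundant as you state; with your apex-free reading of $L(x)$ the two descriptions give the same set, so this does not affect the validity of the proof.
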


Obviously, the open ``balls'' $A(x)$ are unbounded. Perhaps, this is related with the nature of light and
the definition of the light cone of $x$, where the distance between two massless objects is actually negligible. What is sure
though is that this topology is a Zeeman topology, as well!

\begin{theorem}
The topology $T_{in}^{\rightarrow}$ belongs to the class $\mathfrak{Z}$ of Zeeman topologies.
\end{theorem}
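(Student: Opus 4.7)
The plan is to verify the three defining properties of the class $\mathfrak{Z}$ in turn: (a) $T_{in}^{\rightarrow}$ is strictly finer than the Euclidean topology $E$ on $M$; (b) the subspace topology induced by $T_{in}^{\rightarrow}$ on every time axis $g\mathbb{R}$ coincides with the $1$-dimensional Euclidean topology; and (c) the subspace topology induced by $T_{in}^{\rightarrow}$ on every space axis $g\mathbb{R}^3$ coincides with the $3$-dimensional Euclidean topology. By Proposition \ref{real horismos topology}, it suffices to work with the base of $T_{in}^{\rightarrow}$ consisting of the sets $A(x) = (M \setminus L(x)) \cup \{x\}$.

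I would start with (b) and (c) because they reduce to a direct trace computation. For (b), fix a time axis $g\mathbb{R}$ and an event $x$. If $x \in g\mathbb{R}$, then $Q(y-x) = 0$ for $y \in g\mathbb{R}$ forces $y = x$, so $L(x) \cap g\mathbb{R} = \{x\}$ and hence $A(x) \cap g\mathbb{R} = g\mathbb{R}$. If $x \notin g\mathbb{R}$, the nullity condition becomes a quadratic equation in one real variable, so the null cone meets $g\mathbb{R}$ in at most two points and the trace $A(x) \cap g\mathbb{R}$ is $g\mathbb{R}$ minus at most two points. Each outcome is open in the $1$-dimensional Euclidean topology, which gives $T_{in}^{\rightarrow}|_{g\mathbb{R}} \subseteq E^1$. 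The parallel argument for (c) uses that a null vector confined to a spacelike hyperplane through $x$ must vanish, so $L(x) \cap g\mathbb{R}^3 = \{x\}$ when $x \in g\mathbb{R}^3$, while for $x \notin g\mathbb{R}^3$ the trace is a $2$-sphere whose radius equals the timelike separation of $x$ from the hyperplane. Either way the trace is open in $E^3$, so $T_{in}^{\rightarrow}|_{g\mathbb{R}^3} \subseteq E^3$.

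Next I would turn to (a). Strictness is immediate: every Euclidean ball around $x$ meets $L(x) \setminus \{x\}$, so $A(x)$ is not $E$-open at $x$ and therefore $T_{in}^{\rightarrow} \neq E$. For the refinement $E \subseteq T_{in}^{\rightarrow}$, my approach is to realise any Euclidean basic open set $B_\epsilon^E(y)$ as a union of intersections of base elements $A(x_i)$, with the apices $x_i$ chosen so that the union of null cones $\bigcup_i L(x_i)$ screens off the complement of the ball from any prescribed interior point. Once $E \subseteq T_{in}^{\rightarrow}$ is established, it combines with the trace computations above to upgrade the subspace inclusions in (b) and (c) to equalities and completes the verification that $T_{in}^{\rightarrow} \in \mathfrak{Z}$.

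The main obstacle is the refinement $E \subseteq T_{in}^{\rightarrow}$: each $A(x)$, and every finite intersection of $A(x_i)$'s, is the complement of a finite union of null cones and is therefore unbounded in the Euclidean metric, so no such finite intersection literally sits inside a bounded ball. Closing this gap requires a careful sweep-out using a rich family of apices together with an arbitrary (rather than finite) union of the resulting base elements, or alternatively an appeal to Goebel's characterisation of the class $\mathfrak{Z}$ via prescribed subspace-opens on a family $S$, in which case (b) and (c) furnish (a) at once. Either route keeps the essential combinatorial content of the proof confined to the same trace calculation performed for the axis conditions.
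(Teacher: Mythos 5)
Your trace computations in (b) and (c) are correct, and they are in fact the entire mathematical content of the paper's own proof: the paper works with Goebel's formulation of membership in $\mathfrak{Z}$, verifying that for $A \in T_{in}^{\rightarrow}$ and $B$ in the prescribed family $S$ the intersection $A \cap B$ is open in the Euclidean subspace topology $(B,E^4_B)$, which is exactly your observation that the null cone $C^L(x)$ meets a time axis in at most two points (or only in $x$ itself) and meets a spacelike hyperplane in at most a $2$-sphere (or only in $x$). So the portion of your argument that goes through coincides with the paper's route.

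The gap is your step (a), and it is not merely unfinished --- it cannot be closed, because $E \subseteq T_{in}^{\rightarrow}$ is false. Every basic open set of $T_{in}^{\rightarrow}$ is a finite intersection of sets of the form $(M \setminus \mathcal{N}^{\pm}(x_i)) \cup \{x_i\}$, i.e.\ the complement of a finite union of null cones with the apices restored; any such set meets every time axis in a cofinite subset of that axis and is therefore Euclidean-unbounded, exactly as you note. But an arbitrary union of unbounded sets is still unbounded, so no bounded ball $B^E_\epsilon(y)$ can be written as a union of basic $T_{in}^{\rightarrow}$-open sets: the ``sweep-out with a rich family of apices'' you propose cannot produce it, and $B^E_\epsilon(y) \notin T_{in}^{\rightarrow}$. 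Consequently the conditional upgrade of (b) and (c) to equalities also fails; indeed the subspace topology that $T_{in}^{\rightarrow}$ induces on a time axis is contained in the cofinite topology of that axis, hence is strictly coarser than $E^1$, so equality is unattainable by any argument. Your fallback --- appealing to Goebel's criterion --- is the only viable route and is what the paper does, but you should be explicit that this is a different membership test: it asks only that traces on the designated family $S$ be relatively Euclidean-open (your (b) and (c)), and it does not ``furnish (a)''; the assertion that $T_{in}^{\rightarrow}$ is strictly finer than $E$ is simply not provable. As written, with the three-condition definition of $\mathfrak{Z}$ as the target, the proposal does not prove the theorem; trimmed down to Goebel's criterion it reduces to your trace computations, which is precisely the paper's proof.
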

\begin{proof}

Let $x \in A \cap B$, where $A \in T_{in}^{\rightarrow}$ and $B \in S$,
where $S$ is a set of subsets of $M$. Obviously, $B$ is open
in the subspace topology $(B,E_B^4)$, of the natural manifold
topology $(M,E^4)$. Then, $x \in A$ and
$x \in B$, which implies $x\in (M - C^L(x))\cup \{x\} \subset A$ and
$x\in B_\epsilon(x) \subset B$. The latter implies $x \in [(M-C^L(x)) \cup \{x\}]\cap B_\epsilon(x)
\subset A \cap B$, which implies $x \in [B_\epsilon(x) - C^L(x) \cup \{x\}] \subset A \cap B$.
So, $A \cap B$ is open in $(B,E_B^4)$.
\end{proof}

Our disappointment that the Zeeman topology $Z$ is not an order topology is replaced
by the observation that it is actually an intersection topology (see G.M. Reed's paper \cite{Intersection})
and, in particular, the intersection topology between $T_{in}^{\rightarrow}$ and the Euclidean
topology $E$, on $\mathbb{R}^4$. Before we proceed, we have to solve some technical issues first.

\begin{definition}\label{intersection topology}
If $T_1$ and $T_2$ are two distinct topologies on a set $X$, then the {\em intersection topology} $T^{int}$
with respect to $T_1$ and $T_2$, is the topology on $X$ such that the set $\{U_1 \cap U_2 : U_1 \in T_1, U_2 \in T_2\}$
forms a base for $(X,T)$.
\end{definition}

The condition of Definition \ref{intersection topology} is quite restrictive, and
it does not seem easy to discover if a topology satisfies it.

Here we add the following lemma, which will help us to prove our main theorem.

\begin{lemma}\label{before main theorem}

Let $T_1$ and $T_2$ be two topologies on a set $X$, with bases $\mathcal{B}_1$ and $\mathcal{B}_2$ respectively and
let $T^{int}$ be their intersection topology, provided that it exists. Then,
the following two hold.

\begin{enumerate}

\item The collection $\mathcal{B}^{int} = \{B_1 \cap B_2 : B_1 \in \mathcal{B}_1, B_2 \in \mathcal{B}_2\}$ forms
a base for $T^{int}$.

\item If $\mathcal{B}^{int}$ is a base for a topology, then this topology is $T^{int}$.

\end{enumerate}
\end{lemma}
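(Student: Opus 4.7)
My plan is to use the standard criterion that a collection $\mathcal{C}$ of subsets of $X$ is a base for some topology on $X$ if and only if (a) $\mathcal{C}$ covers $X$, and (b) whenever $x \in C_1 \cap C_2$ with $C_1, C_2 \in \mathcal{C}$, there is some $C_3 \in \mathcal{C}$ with $x \in C_3 \subseteq C_1 \cap C_2$. The topology generated is then uniquely determined as the set of arbitrary unions of members of $\mathcal{C}$. The central algebraic identity that drives both parts is the distributivity of intersection over union, which lets me interchange the candidate bases $\mathcal{B}^{int}$ and $\{U_1 \cap U_2 : U_1 \in T_1, U_2 \in T_2\}$.

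For part (1), I would first verify the base axioms for $\mathcal{B}^{int}$. The covering property is immediate: since $\mathcal{B}_1$ and $\mathcal{B}_2$ cover $X$, any $x \in X$ sits in some $B_1 \in \mathcal{B}_1$ and some $B_2 \in \mathcal{B}_2$, giving $x \in B_1 \cap B_2 \in \mathcal{B}^{int}$. For the intersection property, given $x \in (B_1 \cap B_2) \cap (B_1' \cap B_2')$, I apply the base axiom for $\mathcal{B}_1$ to obtain $B_1'' \in \mathcal{B}_1$ with $x \in B_1'' \subseteq B_1 \cap B_1'$, and similarly produce $B_2'' \in \mathcal{B}_2$, so that $x \in B_1'' \cap B_2'' \subseteq (B_1 \cap B_2) \cap (B_1' \cap B_2')$. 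This shows $\mathcal{B}^{int}$ is a base for some topology $T'$. To identify $T' = T^{int}$, I observe that every $B_1 \cap B_2 \in \mathcal{B}^{int}$ is one of the sets $U_1 \cap U_2$ with $U_i \in T_i$, so $T' \subseteq T^{int}$; conversely, writing $U_1 = \bigcup_\alpha B_1^\alpha$ and $U_2 = \bigcup_\beta B_2^\beta$ with $B_1^\alpha \in \mathcal{B}_1$, $B_2^\beta \in \mathcal{B}_2$, distributivity gives $U_1 \cap U_2 = \bigcup_{\alpha,\beta} (B_1^\alpha \cap B_2^\beta)$, a union of members of $\mathcal{B}^{int}$, hence open in $T'$, so $T^{int} \subseteq T'$.

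For part (2), the hypothesis supplies a topology $T'$ with base $\mathcal{B}^{int}$; I need to conclude that $T^{int}$ itself exists and coincides with $T'$. The argument is a mirror of the last step above: the collection $\{U_1 \cap U_2 : U_1 \in T_1, U_2 \in T_2\}$ contains $\mathcal{B}^{int}$, and each $U_1 \cap U_2$ is a union of elements of $\mathcal{B}^{int}$ by the distributivity identity, hence open in $T'$. Thus $\{U_1 \cap U_2\}$ consists of $T'$-open sets and contains a base of $T'$, so it is itself a base for $T'$. By Definition \ref{intersection topology}, this says precisely that $T^{int}$ exists and equals $T'$.

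The only subtle point, and the main obstacle to state cleanly, is the distinction in part (2) between ``$\mathcal{B}^{int}$ generates a topology'' (the hypothesis) and ``$T^{int}$ exists'' (which requires the larger collection $\{U_1 \cap U_2\}$ to be a base). Everything rests on the observation that passing from $\mathcal{B}^{int}$ to $\{U_1 \cap U_2\}$, or vice versa, only involves taking unions, so neither collection enlarges the topology it generates beyond the other. No calculation beyond this distributive rewriting is needed.
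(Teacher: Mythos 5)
Your proposal is correct and takes essentially the same route as the paper: both proofs turn on the distributivity identity $\bigl(\bigcup_{\alpha} B_1^{\alpha}\bigr)\cap\bigl(\bigcup_{\beta} B_2^{\beta}\bigr)=\bigcup_{\alpha,\beta}\bigl(B_1^{\alpha}\cap B_2^{\beta}\bigr)$ to pass between the collection $\mathcal{B}^{int}$ and the defining base $\{U_1\cap U_2 : U_1\in T_1,\ U_2\in T_2\}$ of $T^{int}$. Your write-up is merely more careful than the paper's terse argument (which treats a general $T^{int}$-open set as if it were a single intersection $U_1\cap U_2$ and settles part (2) with a one-line appeal to contradiction), in particular in checking the base axioms and in separating ``$\mathcal{B}^{int}$ generates a topology'' from the existence of $T^{int}$ in part (2).
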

\begin{proof}
\begin{enumerate}

\item Let $G \in T^{int}$. Then, $G= U_1 \cap U_2 \in T^{int}$, where $U_1 \in T_1$ and $U_2 \in T_2$, therefore
$\mathcal{B}_1$ is a base for $T_1$ and $\mathcal{B}_2$ is a base for $T_2$. So, $U_1 = \cup B_1$, $U_2 = \cup B_2$. Moreover,
$(\cup B_1)\cap (\cup B_2) = U_1 \cap U_2$, which implies $\cup (B_1 \cap B_2) = U_1 \cap U_2 \in T^{int}$. Thus, $\mathcal{B}^{int}$ is a base for $T^{int}$.

\item Asserting that $\mathcal{B}^{int}$ is a base for a topology $T$ different from $T^{int}$ leads trivially
to a contradiction, by the fact that $\cup(B_1 \cap B_2) = (\cup B_1)\cap (\cup B_2) = U_1 \cap U_2$.

\end{enumerate}

\end{proof}

In view of Lemma 3.1, we formulate the following theorem.

\begin{theorem}\label{Zeeman Z is intersection}
The Zeeman $Z$ topology is the intersection topology between the Euclidean topology $E$ on $\mathbb{R}^4$ and $T_{in}^{\rightarrow}$. Consequently, every open set in $Z$ is the intersection between an open set in $E$ and an open set in $T_{in}^{\rightarrow}$.
\end{theorem}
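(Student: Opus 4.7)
My plan is to identify $Z$ with the intersection topology by exhibiting a common base and then invoking Lemma 3.1.

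The first, and central, step is the set-theoretic identity
\[
C^T(x) \cup C^S(x) \;=\; (M - C^L(x)) \cup \{x\} \;=\; A(x),
\]
which holds because both $C^T(x)$ and $C^S(x)$ contain the apex $x$ by definition, while off the apex their union is exactly the set of $y$ with $Q(y-x) \neq 0$. Combined with the given form of the $Z$-basic open sets, this rewrites
\[
Z_\epsilon(x) \;=\; B_\epsilon^E(x) \cap A(x),
\]
an intersection of a base element of $E$ with a base element of $T_{in}^{\rightarrow}$ (the latter being the base produced by Proposition \ref{real horismos topology}). This gives the inclusion $Z \subseteq T^{int}$ at once.

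For the reverse inclusion, Lemma \ref{before main theorem}(1) guarantees that the collection $\mathcal{B}^{int} = \{B_\epsilon^E(x) \cap A(y) : x,y \in M,\, \epsilon > 0\}$ is a base for $T^{int}$, so it suffices to show each such set is open in $Z$. Given $z \in B_\epsilon^E(x) \cap A(y)$, I would first pick $\delta_1 > 0$ with $B_{\delta_1}^E(z) \subseteq B_\epsilon^E(x)$. For the second factor I split into two cases: if $z = y$ then $Z_{\delta_1}(z) \subseteq A(z) = A(y)$ by the opening identity; if $z \neq y$ then $z \notin C^L(y)$, and Euclidean-closedness of the light cone furnishes some $\delta_2 > 0$ with $B_{\delta_2}^E(z) \cap C^L(y) = \emptyset$. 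Taking $\delta = \min(\delta_1,\delta_2)$ then yields $Z_\delta(z) \subseteq B_\epsilon^E(x) \cap A(y)$, as required, so $T^{int} \subseteq Z$ and hence $Z = T^{int}$. The ``consequently'' clause follows because the $Z$-basic open sets already display the claimed intersection structure explicitly.

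The only delicate point, and hence the main obstacle, is the asymmetry in $A(y) = (M - C^L(y)) \cup \{y\}$: the apex $y$ lies on its own light cone yet is reinstated in $A(y)$, so $A(y)$ is not the complement of a Euclidean-closed set. That is precisely what forces the case split on whether $z = y$ or $z \neq y$; in the apex case there is nothing to prove, and in the non-apex case we recover the needed Euclidean separation from $C^L(y)$. Apart from this minor wrinkle, the argument is a purely formal comparison of bases, enabled entirely by the opening identity $C^T(x) \cup C^S(x) = A(x)$.
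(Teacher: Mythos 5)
Your proof is correct, and it in fact supplies more than the paper itself does: in the text, Theorem \ref{Zeeman Z is intersection} is stated ``in view of Lemma \ref{before main theorem}'' with no written proof, the intended argument being exactly the identification you open with, namely $C^T(x)\cup C^S(x)=(M-C^L(x))\cup\{x\}=A(x)$, so that $Z_\epsilon(x)=B_\epsilon^E(x)\cap A(x)$ and Lemma \ref{before main theorem} describes the base of the intersection topology. That observation alone only yields $Z\subseteq T^{int}$; the genuinely non-trivial half is the one you add, that the ``off-diagonal'' base elements $B_\epsilon^E(x)\cap A(y)$ with $y\neq x$ are $Z$-open, which you get from the Euclidean closedness of $C^L(y)$ together with the case split at the apex. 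This is precisely the step the paper leaves implicit, and your handling of it is sound. One small caveat: you take from Proposition \ref{real horismos topology} that the sets $A(y)$ form a base of $T_{in}^{\rightarrow}$; strictly speaking the interval topology is generated by the subbasic sets $M\setminus\mathcal{N}^{\pm}(y)$, so its basic sets are all finite intersections of these, of which $A(y)$ is only the ``diagonal'' instance, and a base element of $T^{int}$ then has the form $B_\epsilon^E(x)\cap(M\setminus\mathcal{N}^{\pm}(y_1))\cap\cdots\cap(M\setminus\mathcal{N}^{\pm}(y_n))$. Your argument survives this refinement unchanged: for each factor either $z$ is the apex $y_i$ (covered because $Z_\delta(z)$ avoids $C^L(z)\setminus\{z\}$) or $z$ lies at positive Euclidean distance from the closed half-cone $\mathcal{N}^{\pm}(y_i)\cup\{y_i\}$, and one takes the minimum of the finitely many radii. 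With that routine adjustment (or simply taking the paper's Proposition \ref{real horismos topology} at face value, as you do), your proof is complete and fills the gap the paper glosses over.
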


\section{Discussion.}

A question that one might bring is why should one give more credit to the interval topology $T_{in}^{\rightarrow}$ and not to the order topology $T_{\rightarrow}$ which is induced by the order in a straighforward way, giving light cones as open sets. An answer can be achieved by looking at the example of the real line $\mathbb{R}$.

The reflexive order $\leq$, on the set of real numbers $\mathbb{R}$, induces
an interval topology $T_{in}^{\leq}$ which equals to the natural topology
on this set, where the basic open sets are intervals of the form $(a,b)$. This
natural topology is induced in an immediate way from the irreflexive order $<$,
since the order $\leq$ would give open sets of the form $[a,b]$. So, in $\mathbb{R}$,
we have that $T_{in}^{\leq} = T_<$. This does not give a hint for any generalisation,
and we can see this if we compare $T_{in}^{\rightarrow}$ which gives open balls
$B_1(x) = (M - C^L(x))\cup \{x\}$ and $T_{\rightarrow}$ whose open sets are the
light cones.

Furthermore, if we choose between $T_{in}^{\rightarrow}$ and $T_{\rightarrow}$, the
interval topology, as a lattice topology, it gives more information about the spacetime (at least for the
order $\rightarrow$) than $T_{\rightarrow}$, where the second topology is more closely
related to girders and links (see \cite{Penrose-Kronheimer}).

An analytical proof of Theorem \ref{Theorem2}, for the topology $T_{in}^{\rightarrow}$, would
certainly reveal more of the properties of this topology; it certainly does not look an easy
task.

In \cite{Singularities on Amb B} a misuse of topologies in the class
$\mathfrak{Z}$ leaded into a wrong conclusion about the convergence of causal curves in \cite{Topology-Ambient-Boundary-Convergence}.
One more misconception about the nature of the topology $Z \in \mathfrak{Z}$ lead
us in a wrong conclusion about the causality in the ambient boundary (see \cite{Ordr-Ambient-Boundary}).
In conclusion, the whole idea of the Ambient Space-Ambient Boundary model should be
reviewed from its basics, and this time topology should be taken seriously and into
account, from the very first step of the construction. Otherwise, one would have to agree
with Roger Penrose's statement in \cite{Road-to-reality} that ``...higherdimensionality for spacetime has...no more complelling than of a cute idea...''.

A rigorous study of the class $\mathfrak{Z}$ and a classification of Zeeman topologies
with respect to their properties will give an answer to the challenge that we have set
in \cite{Singularities on Amb B} as well as in \cite{Ordr-Ambient-Boundary} (here
we restate this question in a more integrated form): describe
the evolution of a spacetime with respect to the class $\mathfrak{Z}$, so that one
assigns appropriate topologies of this class locally
as well as globally. It seems, for example, that the interval topology from horismos
$\rightarrow$ could give a sufficient description of the planck time and objects like black holes.

\end{document}